\documentclass[11pt]{article}
\usepackage{amsmath,amssymb,amsthm,natbib}
\usepackage{bm}
\usepackage{geometry}
\newtheorem{definition}{Definition}
\newtheorem{proposition}{Proposition}
\newtheorem{myproposition}{Proposition}
\newtheorem{assumption}{Assumption}

\newcommand{\comments}[1]{}
\title{Backtracking Direct, Indirect and Total Effects}
\author{Jose M. Pe\~{n}a\\STIMA, IDA, LiU}
\date{}
\begin{document}
\maketitle

\begin{abstract}%
Natural direct and indirect effects are defined through cross-world interventional counterfactuals. We propose an alternative construction based on backtracking counterfactuals. Rather than intervening on the treatment or mediator and change the structural equations, we use a backtracking conditional distribution to construct an alternative exogenous state satisfying a desired treatment-mediator constraint and then evaluate the unchanged structural equations on that state. This yields the backtracking direct and indirect effects, and also the backtracking total effect. Likewise, we adapt the population intervention effects. 
\end{abstract}

\section{Introduction}

Consider an employee who currently works in the office but experiences a major change in her background circumstances, e.g., caregiving responsibilities, commute difficulties, or home working conditions. In response, the employee switches to working from home. This change in work arrangement may affect productivity directly, as well as indirectly through its effect on the time spent collaborating with colleagues, which serves as a mediator. The company wants to assess the impact on the average productivity of such change in the circumstances.

The natural direct effect (NDE) addresses the question of what would happen to productivity if we intervene to change the employees' work arrangement while keeping the mediator at its value under the reference work arrangement and holding the background circumstances fixed \citep{pearl2009causality}. Although this intervention-based question is meaningful, it does not capture that the change in work arrangement is due to a change in the circumstances of its employees.

This motivates us to introduce in this paper the backtracking direct effect (BDE), which considers the change in productivity due to a change in work arrangement induced by a change in background circumstances, while the mediator remains unchanged. Whereas the NDE assesses the effect of an action, our BDE assesses the effect of a circumstances-driven switch. Our BDE thus provides a complementary perspective for causal questions concerning the response of individuals to changes in their circumstances.

An alternative motivation comes from the difference between interventional and backtracking counterfactuals \citep{voniigelgen2023backtracking}: the former examine how changing the structural equations affects the outcome while holding the exogenous state fixed, whereas the latter examine how changing the exogenous state affects the outcome while keeping the structural equations unchanged. The NDE is based on interventional counterfactuals, whereas our BDE is based on backtracking counterfactuals.

By being an adaptation of the NDE to backtracking counterfactuals, our BDE inherits features from both concepts. Like both, it presupposes a structural causal model. Like the NDE, it requires the choice of a reference treatment state. Like backtracking counterfactuals, it requires specifying a backtracking conditional distribution that determines which alternative exogenous states are possible relative to the factual one. Consequently, the BDE depends on these choices, just as its two conceptual antecedents depend on their respective choices.

We do not limit ourselves to the BDE, and we also propose the backtracking indirect effect (BIE) and the backtracking total effect (BTE). We also adapt the population intervention effects \citep{fulcher2020robust} to backtracking. The rest of the paper is organized as follows. First, we review interventional and backtracking counterfactuals, as well as natural and population interventional direct, indirect and total effects. Then, we introduce our backtracking adaptations thereof, including a numerical example and some remarks. Finally, we produce a decomposition that unifies our adaptations.

\section{Counterfactuals and Direct, Indirect and Total Effects}

This section reviews counterfactuals and some measures of the direct, indirect and total effects. All random variables are discrete. Consider a structural causal model (SCM) with endogenous variables
\[
X=f_X(U_X),\qquad
M=f_M(X,U_M),\qquad
Y=f_Y(X,M,U_Y),
\]
and exogenous variables or background circumstances
\[
U=(U_X,U_M,U_Y)\sim P_U.
\]
Let
\[
X(U)=f_X(U_X),\qquad
M(U)=f_M(X(U),U_M),\qquad
Y(U)=f_Y(X(U),M(U),U_Y)
\]
and
\[
M_x=f_M(x,U_M),\qquad
Y_{x,M_{\bar{x}}}=f_Y(x,M_{\bar{x}},U_Y),\qquad
Y_x=Y_{x,M_x}.
\]
Moreover, let
\[
P_{BT}(u' | u)
\]
denote a backtracking conditional distribution that assigns a distribution over alternative or counterfactual exogenous states \(U'\) conditional on a factual state \(U=u\). Hereafter, a prime on an endogenous variable denotes the same variable with $U$ replaced by $U'$ in its definition, e.g., $X'=f_X(U_X')$ and $Y_{x,M_{\bar{x}}}'=f_Y(x,M_{\bar{x}},U_Y')$.

A counterfactual asks what the outcome would have been had something been different, for a unit with a given factual exogenous state $u$. Interventional counterfactuals \citep{pearl2009causality} answer by replacing the structural equation of a variable with a constant while holding $u$ fixed. For instance, the intervention $X=x$ yields
\[
Y_x(u)=f_Y(x,f_M(x,u_M),u_Y).
\]
Interventional counterfactuals are sometimes criticized for relying on ``small miracles", i.e., overriding a structural equation amounts to violating the laws of nature. Backtracking counterfactuals \citep{voniigelgen2023backtracking} avoid this by attributing the difference to a change in the background circumstances leaving the structural equations untouched, and are therefore better suited to questions about how individuals respond to changed circumstances. Specifically, backtracking counterfactuals consider an alternative exogenous state $u'$ drawn from $P_{BT}(u' | u)$, which favors states close in some sense to the factual one $u$, and evaluate the SCM at $u'$. For a counterfactual condition $C(U,U')$ on the variables, e.g., $C(U,U')=\{X(U)=0, X(U')=1\}$, the backtracking counterfactual distribution of the outcome is
\[
P(Y(u')=y | u,C(u,u'))
=\frac{\sum_{u'} 1_{Y(u')=y}\,1_{C(u,u')}\,P_{BT}(u' | u)}
{\sum_{u'} 1_{C(u,u')}\,P_{BT}(u' | u)}.
\]

The natural direct and indirect effects \citep{pearl2009causality} for the contrast between the population with \(X=1\) and \(X=0\) are
\[
\text{NDE}
=
\mathbb{E}[
Y_{1,M_0}-Y_{0,M_0}
]
\]
and
\[
\text{NIE}
=
\mathbb{E}[
Y_{0,M_1}-Y_{0,M_0}
]
\]
whereas the total effect is
\[
\text{TE}
=
\mathbb{E}[
Y_{1}-Y_{0}
].
\]

The population intervention direct and indirect effects \citep{fulcher2020robust} for the contrast between the current population and one in which the exposure is set to a reference value $X=0$ are
\[
\text{PIDE}=\mathbb{E}[Y_{X,M_0} - Y_{0,M_0}]
\]
and
\[
\text{PIIE}=\mathbb{E}[Y_{X,M_X} - Y_{X,M_{0}}]
\]
whereas the population intervention (total) effect is
\[
\text{PIE}=\mathbb{E}[Y_X - Y_{0}].
\]

The NDE, NIE and TE contrast treatment values $1$ and $0$ for the whole population, i.e., everybody treated against nobody treated. The PIDE, PIIE and PIE instead start from the population as it is, with each unit at its natural exposure $X(u)$, and move the exposure and/or the mediator to its value under $X(u)=0$. Units already at $X(u)=0$ contribute nothing, which makes these effects suited to harmful exposures such as smoking, where the question is what would happen if the exposure were eliminated.

Note that the effects described above are all defined through interventional counterfactuals.

\section{Backtracking Direct, Indirect and Total Effects}

This section proposes alternatives to the NDE, NIE and TE based on backtracking counterfactuals.

\subsection{Backtracking Direct Effect}

We propose an alternative to the NDE, in which the treatment changes by a backtracking operation rather than by an intervention, while the mediator remains unchanged. We present first an intuitive construction of our proposal, followed by its formal definition. Consider a unit $u$ for which
\[
X(u)=0.
\]
The factual mediator is then
\[
M(u)=f_M(0,u_M)=M_0.
\]
Draw an alternative exogenous state according to the backtracking conditional distribution
\[
U' | U=u
\sim
P_{BT}(\cdot | u)
\]
and retain the draw whenever
\[
X(u')=1
\qquad\text{and}\qquad
M(u')=M(u).
\]
Thus, the treatment changes from \(0\) to \(1\), while the mediator is constrained to remain at its factual natural value for \(X(u)=0\). Finally, contrast \(Y(u)\) and \(Y(u')\). Importantly, the construction does not require \(U_Y'=U_Y\). The outcome disturbance is therefore allowed to change under backtracking.

\begin{definition}
Let $R_D(U,U')=\{X(U)=0, \,X(U')=1, \,M(U')=M(U)\}$ denote the retained \((U,U')\) pairs.\footnote{Throughout, we assume that the retention sets considered have positive probability, since otherwise the corresponding effect is undefined. This holds, for instance, if $P_U(u)>0$ and $P_{BT}(u'| u)>0$ for all $u,u'$, and $f_X(\cdot)$ and $f_M(x,\cdot)$ are onto for each $x$.} The backtracking direct effect associated with the backtracking conditional distribution \(P_{BT}\) is
\[
\begin{aligned}
\text{BDE}_{P_{BT}}
={}&
\frac{\sum_{(u,u')} \,[Y(u') - Y(u)] \,P_U(u) \,P_{BT}(u' | u) \,1_{R_D(u,u')}}{\sum_{(u,u')} \,P_U(u) \,P_{BT}(u' | u) \,1_{R_D(u,u')}}
\\
={}& 
\mathbb{E}[
Y(U')-Y(U)
\, | R_D(U,U')
]\\
={}&
\mathbb{E}[
Y_{1, M_0}' - Y_{0, M_0}\, | R_D(U,U')
].
\end{aligned}
\]
\end{definition}

Note that the backtracking conditional distribution is part of the definition, so different distributions can produce different effects.

The BDE can be decomposed into a direct effect term and a backtracking term as
\[
\mathbb{E}[
Y_{1, M_0}' - Y_{0, M_0} | R_D(U,U')
] = \mathbb{E}[
Y_{1, M_0}' - Y_{0, M_0}' | R_D(U,U')
] + \mathbb{E}[
Y_{0, M_0}' - Y_{0, M_0} | R_D(U,U')
].
\]
The first term corresponds to the direct effect in the counterfactual world of changing the treatment while holding the mediator fixed at its factual natural value. The second term captures the change induced by the backtracking of the outcome disturbance, so the BDE and NDE need not coincide.

Note that the BDE considers the same treatment change and unchanged mediator as the NDE, but the change results from the alternative circumstances produced by backtracking rather than from an intervention. These alternative circumstances may also affect the outcome directly. The BDE is therefore useful for measuring the effect of the individuals' response to a change in their circumstances, rather than to an imposed action with the circumstances held fixed. In the introductory example, it tells the company what change in average productivity to expect if employees' background circumstances change and, in response, they switch to working from home while continuing to collaborate with colleagues as before.

\subsection{Backtracking Indirect Effect}

We also propose an alternative to the NIE, in which the mediator changes by a backtracking operation rather than by an intervention, while the treatment remains unchanged. We begin again with an intuitive construction of our proposal, followed by its formal definition. Start with a unit $u$ satisfying
\[
X(u)=0.
\]
The mediator value that the NIE considers is its natural value under the alternative treatment, $M_1=f_M(1,u_M)$. Draw
\[
U' | U=u \sim P_{BT}(\cdot | u)
\]
and retain the draw whenever
\[
X(u')=0 \quad\text{and}\quad M(u')=M_1.
\]
Thus, the treatment remains at its reference value, while the mediator is constrained to take the factual natural value under the alternative treatment. Finally, contrast $Y(u)$ and $Y(u')$. No restriction is imposed requiring $U'_Y=U_Y$.

\begin{definition}
Let $R_I(U,U')=\{X(U)=0,\,X(U')=0,\,M(U')=M_1\}$ denote the retained $(U,U')$ pairs. The backtracking indirect effect associated with the backtracking conditional distribution $P_{BT}$ is
\[
\begin{aligned}
\text{BIE}_{P_{BT}}
={}&
\frac{\sum_{(u,u')}[Y(u')-Y(u)]\,P_U(u)\,P_{BT}(u' | u)\,1_{R_I(u,u')}}
     {\sum_{(u,u')}P_U(u)\,P_{BT}(u' | u)\,1_{R_I(u,u')}}\\
={}&
\mathbb{E}[Y(U')-Y(U) | R_I(U,U')]\\
={}&
\mathbb{E}[Y'_{0,M_1}-Y_{0,M_0} | R_I(U,U')].
\end{aligned}
\]
\end{definition}

Note that the BIE depends on the specified backtracking conditional distribution. Moreover, the BIE can be decomposed into an indirect effect term and a backtracking term as
\[
\mathbb{E}[Y'_{0,M_1}-Y_{0,M_0} | R_I(U,U')] = \mathbb{E}[Y'_{0,M_1}-Y_{0,M_0}' | R_I(U,U')] + \mathbb{E}[Y'_{0,M_0}-Y_{0,M_0} | R_I(U,U')].
\]
The first term represents the effect in the counterfactual world of changing the factual natural value of the mediator while keeping the treatment fixed at its reference value. The second term captures the change induced by the backtracking of the outcome disturbance, so the BIE and NIE need not coincide.

Note that the BIE considers the same unchanged treatment and mediator change as the NIE, but the change results from the alternative circumstances produced by backtracking rather than from an intervention. These alternative circumstances may also affect the outcome directly. The BIE is therefore useful for measuring the effect of the individuals' response to a change in their circumstances, rather than to an imposed action with the circumstances held fixed. In the introductory example, it tells the company what change in average productivity to expect if employees' circumstances change in a way that leads them to collaborate with colleagues as much as they would have if they worked from home, but they keep working in the office.

\subsection{Backtracking Total Effect}

We finally propose an alternative to the TE, in which the treatment changes by a backtracking operation rather than by an intervention. We begin again with an intuitive construction of our proposal, followed by its formal definition. Consider a unit $u$ satisfying $X(u)=0$. Draw an alternative unit according to
\[
U'| U=u \sim P_{BT}(\cdot| u),
\]
and retain it whenever $X(u')=1$. The total outcome contrast is then $Y(u')-Y(u)$. No restriction is imposed requiring $U'_M=U_M$ or $U_Y'=U_Y$.

\begin{definition}
Let $R_T(U,U')=\{X(U)=0, \,X(U')=1\}$ denote the retained \((U,U')\) pairs. The backtracking total effect associated with the backtracking conditional distribution $P_{BT}$ is
\[
\begin{aligned}
\text{BTE}_{P_{BT}}
={}&
\frac{\sum_{(u,u')} \,[Y(u') - Y(u)] \,P_U(u) \,P_{BT}(u' | u) \,1_{R_T(u,u')}}
{\sum_{(u,u')} \,P_U(u) \,P_{BT}(u' | u) \,1_{R_T(u,u')}}\\
={}&
\mathbb{E}[
Y(U')-Y(U)
\, | R_T(U,U')
]\\
={}& 
\mathbb{E}[
Y_{1}'
-
Y_{0}
 | R_T(U,U')
].
\end{aligned}
\]
\end{definition}

Note that the BTE depends on the specified backtracking conditional distribution. Moreover, the BTE can be decomposed into a total effect term and a backtracking term as
\[
\mathbb{E}[
Y_{1}'
-
Y_{0}
 | R_T(U,U')
] = \mathbb{E}[
Y_{1}'
-
Y_{0}'
 | R_T(U,U')
] + \mathbb{E}[
Y_{0}'
-
Y_{0}
 | R_T(U,U')
].
\]
The first term is the total effect in the counterfactual world of changing the treatment. The second term is the change induced by backtracking the mediator and outcome disturbances, so the BTE and TE need not coincide.

Note that the BTE considers the same treatment change as the TE, but the change results from the alternative circumstances produced by backtracking rather than from an intervention. These alternative circumstances may also affect the outcome directly. The BTE is therefore useful for measuring the effect of the individuals' response to a change in their circumstances, rather than to an imposed action with the circumstances held fixed. In the introductory example, it tells the company what change in average productivity to expect if employees' circumstances change and, in response, they switch to working from home and adapt their collaboration with colleagues accordingly.

\section{Agreement with Interventional Effects}

As mentioned before, the backtracking and interventional effects may differ (see Appendix A for an example). However, we believe it is important to characterize cases where the two coincide. The goal of this section is to characterize one such case.

A natural question is why the counterfactual exogenous state $U'$ should have a different marginal distribution from the factual state $U$. A backtracking conditional distribution may describe a policy that changes the circumstances of the individuals and, consequently, shifts the distribution of circumstances in the population. One may instead wish to study changes in circumstances that preserve the population distribution while allowing the factual and counterfactual states to be dependent. We do so in this section. We additionally assume a componentwise coupling between $U$ and $U'$, and a binary and uniform $U_M$. These assumptions characterize the case studied in this section. We formalize these assumptions next.

\begin{assumption}
The marginal distribution of $U'$ induced by $P_U$ and $P_{BT}$ equals $P_U$, i.e.,
\[
\sum_u P_{BT}(u' | u)\,P_U(u) = P_U(u')
\qquad\text{for all } u'.
\]
\end{assumption}

\begin{assumption}
There is a componentwise coupling between $U$ and $U'$, i.e.,
\[
P_U(u)=P(u_X) P(u_M) P(u_Y)
\]
and
\[
P_{BT}(u' | u)=P(u'_X | u_X)\,P(u'_M | u_M)\,P(u'_Y | u_Y).
\]
\end{assumption}

\begin{assumption}
The mediator $M$ and its disturbance $U_M$ are binary, $U_M$ is uniform, and $f_M(x,\cdot)$ is a bijection for each $x$.
\end{assumption}

Under Assumptions~1 and~2, each component of $U'$ has the same marginal distribution as the corresponding component of $U$, and the pairs $(U_X,U'_X)$, $(U_M,U'_M)$ and $(U_Y,U'_Y)$ are independent of each other. Within each pair, the factual and counterfactual components may be dependent. Assumption~1 is equivalent to requiring that the backtracking conditional distribution is symmetric, i.e., $P_{BT}(u' | u)=P_{BT}(u | u')$, which may be considered a desirable property to avoid a priori asymmetries between the factual and counterfactual world \citep{voniigelgen2023backtracking}. Likewise, Assumption~2 may be considered a natural property in those domains where the factual exogenous variables are mutually independent \citep{voniigelgen2023backtracking}.

The following proposition is proved in Appendix B.

\begin{myproposition}
Under Assumptions~1--3,
\[
\text{BDE}_{P_{BT}}=\text{NDE},\qquad
\text{BIE}_{P_{BT}}=\text{NIE},\qquad
\text{BTE}_{P_{BT}}=\text{TE}.
\]
\end{myproposition}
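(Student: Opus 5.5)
The plan is to use Assumption~2 to split each conditional expectation into independent blocks, one for each exogenous component pair, and then use Assumptions~1 and~3 to show that, after conditioning on the retention set, every block that still appears has the same law as in the interventional quantity. Throughout, let $g(x,m)=\mathbb{E}[f_Y(x,m,U_Y)]$. By Assumptions~1 and~2, each component of $U'$ has the same marginal as the corresponding component of $U$. Hence $g(x,m)=\mathbb{E}[f_Y(x,m,U'_Y)]$ as well. Independence of $U_M$ and $U_Y$ under $P_U$ then gives
\[
\text{NDE}=\sum_m P(M_0=m)\,[g(1,m)-g(0,m)],\qquad
\text{NIE}=\mathbb{E}[g(0,M_1)]-\mathbb{E}[g(0,M_0)],\qquad
\text{TE}=\mathbb{E}[g(1,M_1)]-\mathbb{E}[g(0,M_0)].
\]

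I would start with the BTE, which is the simplest case. The event $R_T$ depends only on the pair $(U_X,U'_X)$. By Assumption~2, the pairs $(U_M,U'_M)$ and $(U_Y,U'_Y)$ are independent of this event, so conditioning on $R_T$ leaves their joint law unchanged. On $R_T$ we have $Y(U')=f_Y(1,f_M(1,U'_M),U'_Y)$ and $Y(U)=f_Y(0,f_M(0,U_M),U_Y)$. Since $U'_M,U'_Y$ have the factual marginals and are independent, the first term has expectation $\mathbb{E}[Y_1]$. The second term has expectation $\mathbb{E}[Y_0]$. Therefore $\text{BTE}_{P_{BT}}=\text{TE}$.

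For the BDE, the $X$-part of $R_D$ again factors out by Assumption~2. What remains is the event $E_D=\{f_M(1,U'_M)=f_M(0,U_M)\}$, which involves only $(U_M,U'_M)$. The pair $(U_Y,U'_Y)$ stays independent of everything in the conditioning. Writing $m=f_M(0,U_M)$, we get
\[
\text{BDE}_{P_{BT}}=\mathbb{E}\bigl[g(1,m)-g(0,m)\mid E_D\bigr].
\]
The BIE is handled the same way with $E_I=\{f_M(0,U'_M)=f_M(1,U_M)\}$. On $R_I$, $Y(U')=f_Y(0,f_M(1,U_M),U'_Y)$, which gives
\[
\text{BIE}_{P_{BT}}=\mathbb{E}\bigl[g(0,f_M(1,U_M))-g(0,f_M(0,U_M))\mid E_I\bigr].
\]
Comparing with the displays above, it suffices to show that, conditional on $E_D$ (respectively $E_I$), $U_M$ is still uniform.

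This is the main step, and it is where Assumption~3 is needed. Since $U_M$ is binary and uniform and $U'_M$ has the same marginal, the joint matrix of $(U_M,U'_M)$ has uniform margins. It must therefore have the form $p_{00}=p_{11}=a$ and $p_{01}=p_{10}=1/2-a$. Because $f_M(0,\cdot)$ and $f_M(1,\cdot)$ are bijections, $E_D$ is the event $U'_M=h(U_M)$ with $h=f_M(1,\cdot)^{-1}\circ f_M(0,\cdot)$ a bijection of $\{0,1\}$. So $h$ is either the identity or the swap. In either case $E_D$ consists of two cells of equal mass, $a$ or $1/2-a$, one for each value of $U_M$, so $U_M$ given $E_D$ is uniform. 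The same holds for $E_I$ with $h=f_M(0,\cdot)^{-1}\circ f_M(1,\cdot)$. The positivity assumption in the footnote ensures the cell mass is nonzero.

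Hence $m=f_M(0,U_M)$ given $E_D$ has the law of $M_0$, which yields $\text{BDE}_{P_{BT}}=\text{NDE}$. Likewise, $U_M$ given $E_I$ has the same law as $U_M$, which yields $\text{BIE}_{P_{BT}}=\text{NIE}$.
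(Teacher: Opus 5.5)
Your proposal is correct and follows essentially the same route as the paper: the retention events factor across the independent component pairs by Assumption~2, which gives the BTE directly. For the BDE and BIE, uniform margins force the $(U_M,U'_M)$ table into the $a$, $\tfrac12-a$ form, so the bijection-induced event $U'_M=h(U_M)$ leaves $U_M$ uniform; your $g(x,m)$ device is just a more explicit version of the paper's statement that $(U_M,U'_Y)$ and $(U_M,U_Y)$ both have the law of $(U_M,U_Y)$ given the retention set.
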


\section{Dependence on the Backtracking Conditional Distribution}

As noted before, the backtracking effects depend on the backtracking conditional probability distribution (see Appendix A for an example). Some desiderata for this distribution have been discussed elsewhere \citep{voniigelgen2023backtracking}. Here, we complement that discussion and propose to regard the backtracking conditional distribution as a policy that determines how the current circumstances will change. The dependence of the backtracking effects on the policy is a feature, not a limitation, since it lets the analyst assess the effects of candidate policies before choosing any of them. Each backtracking effect is the effect of the policy on a specific subpopulation, namely the units retained by the corresponding retention set. This can guide the policy selection, as we explain next.

Consider the company of the introductory example, and suppose it can choose between policies that change its employees' circumstances, e.g., subsidizing home offices, offering flexible childcare, or changing commuting benefits. Each policy induces a backtracking conditional distribution. The three backtracking effects then answer the following questions for each policy:
\begin{itemize}
\item BTE: among the employees who switch to working from
home, what is the expected change in productivity, including the effect of their changed circumstances on collaboration and productivity?
\item BDE: among the employees who switch to working from home and collaborate with colleagues as before, what is the expected change in productivity?
\item BIE: among the employees who keep working in the office
but whose collaboration changes to what it would be under working from home, what is the expected change in productivity?
\end{itemize}
Further subpopulations are obtained by changing the reference value. With reference value $1$, the effects concern the employees currently working from home: those who would switch to the office, those who would switch and keep their collaboration, and those who would stay at home but collaborate as they would in the office. The company may want to know these backtracking effects for each policy to guide the choice, since the policies may differ in who responds and how.

However, each backtracking effect is an average over the retained individuals only. Two policies may retain different fractions of the employees and may affect the remaining employees differently, and so these effects alone do not provide a full comparison across policies. They are nevertheless informative for a company that cares about the response of particular subpopulations.

\section{Backtracking Population Direct, Indirect and Total Effects}

The BDE, BIE and BTE concern only the employees who currently work in the office, i.e., $X(u)=0$. A company may instead care about its whole workforce as it is today, with some employees in the office and some already working from home. Suppose the company suspects that working from home is harmful to productivity. Its question is then what would happen if this exposure were eliminated. This is the situation for which the PIDE, PIIE and PIE of Section~2 were introduced, and in this section we adapt them to backtracking counterfactuals. That is, the exposure is not eliminated by an intervention but by a change in the circumstances.

\subsection{Backtracking Population Direct Effect}

Consider a unit $u$ with any exposure $X(u)$. Draw
\[
U' | U=u\sim P_{BT}(\cdot | u)
\]
and retain the draw whenever
\[
X(u')=0 \quad\text{and}\quad M(u)=M(u')=M_0(u').
\]
Thus, the exposure is eliminated, while the mediator is constrained to remain at its counterfactual natural value. Finally, contrast $Y(u)$ and $Y(u')$. As before, the outcome disturbance is allowed to change under backtracking.

\begin{definition}
Let $R_{PD}(U,U')=\{X(U')=0,\,M(U')=M(U)\}$ denote the retained $(U,U')$ pairs. The backtracking population direct effect
associated with $P_{BT}$ is
\[
\text{BPDE}_{P_{BT}}
=\mathbb{E}[Y(U)-Y(U') | R_{PD}(U,U')]
=\mathbb{E}[Y_{X,M_0'}-Y'_{0,M_0'} | R_{PD}(U,U')]
\]
where the expectation is the weighted average with weights
$P_U(u)P_{BT}(u' | u)1_{R_{PD}(u,u')}$, normalized as in Definitions~1-3.
\end{definition}

The BPDE can be decomposed into a direct effect term and a backtracking term as
\[
\mathbb{E}[Y_{X,M_0'}-Y'_{0,M_0'} | R_{PD}(U,U')]
=\mathbb{E}[Y'_{X,M_0'}-Y'_{0,M_0'} | R_{PD}(U,U')]
+\mathbb{E}[Y_{X,M_0'}-Y'_{X,M_0'} | R_{PD}(U,U')].
\]
The first term is the direct effect in the counterfactual world of eliminating the exposure while holding the mediator fixed at its counterfactual natural value. The second term captures the change induced by the backtracking of the outcome disturbance, so the BPDE and PIDE need not coincide. Moreover, note that unlike the PIDE, to which the units with $X(u)=0$ contribute nothing, the BPDE may receive contributions from these units through the second term.

Note that the BPDE considers the same exposure change and unchanged mediator as the PIDE, but the change results from the alternative circumstances produced by backtracking rather than from an intervention. In the introductory example, it tells the company the change in average productivity, relative to today, to expect if the employees' circumstances change and, in response, they work in the office while collaborating with colleagues as before.

\subsection{Backtracking Population Indirect Effect}

Consider a unit $u$ with any exposure $X(u)$. The mediator
value that the PIIE considers is its natural value under the reference exposure, $M_0(u)=f_M(0,u_M)$. Draw $U' | U=u\sim P_{BT}(\cdot | u)$ and retain the draw whenever
\[
X(u')=X(u)\quad\text{and}\quad M(u')=M_0(u).
\]
Thus, the exposure remains at its factual value, while the mediator is constrained to take its factual natural value under the reference exposure. Finally, contrast $Y(u)$ and $Y(u')$. No restriction is imposed requiring $U'_Y=U_Y$.

\begin{definition}
Let $R_{PI}(U,U')=\{X(U')=X(U),\,M(U')=M_0(U)\}$ denote the retained $(U,U')$ pairs. The backtracking population indirect effect associated with $P_{BT}$ is
\[
\text{BPIE}_{P_{BT}}
=\mathbb{E}[Y(U)-Y(U') | R_{PI}(U,U')]
=\mathbb{E}[Y_{X,M_X}-Y'_{X,M_0} | R_{PI}(U,U')]
\]
where the expectation is the weighted average with weights
$P_U(u)P_{BT}(u' | u)1_{R_{PI}(u,u')}$, normalized as in Definitions~1-3.
\end{definition}

The BPIE can be decomposed into an indirect effect term and a backtracking term
as
\[
\mathbb{E}[Y_{X,M_X}-Y'_{X,M_0} | R_{PI}(U,U')]
=\mathbb{E}[Y'_{X,M_X}-Y'_{X,M_0} | R_{PI}(U,U')]
+\mathbb{E}[Y_{X,M_X}-Y'_{X,M_X} | R_{PI}(U,U')].
\]
The first term is the effect in the counterfactual world of changing the factual mediator to its natural value under the reference exposure while keeping the exposure fixed. The second term captures the change induced by the backtracking of the outcome disturbance, so the BPIE and PIIE need not coincide. Moreover, note that unlike the PIIE, to which the units with $X(u)=0$ contribute nothing, the BPIE may receive contributions from these units through the second term.

Note that the BPIE considers the same unchanged exposure and mediator change as the PIIE, but the change results from the alternative circumstances produced by backtracking rather than from an intervention. In the introductory example, it tells the company the change in average productivity, relative to today, to expect if the employees' circumstances change in a way that leads them to collaborate as much as they would have if they worked in the office, while they keep their current work arrangement.

\subsection{Backtracking Population Total Effect}

Consider a unit $u$ with any exposure $X(u)$, draw $U' | U=u\sim P_{BT}(\cdot | u)$, and retain it whenever $X(u')=0$. The total outcome contrast is then $Y(u)-Y(u')$. No restriction is imposed requiring $U'_M=U_M$ or $U'_Y=U_Y$.

\begin{definition}
Let $R_{PT}(U,U')=\{X(U')=0\}$ denote the retained $(U,U')$ pairs. The backtracking population total effect associated with $P_{BT}$ is
\[
\text{BPTE}_{P_{BT}}
=\mathbb{E}[Y(U)-Y(U') | R_{PT}(U,U')]
=\mathbb{E}[Y_X-Y'_0 | R_{PT}(U,U')]
\]
where the expectation is the weighted average with weights
$P_U(u)P_{BT}(u' | u)1_{R_{PT}(u,u')}$, normalized as in Definitions~1-3.
\end{definition}

The BPTE can be decomposed into a total effect term and a backtracking term as
\[
\mathbb{E}[Y_X-Y'_0 | R_{PT}(U,U')]
=\mathbb{E}[Y'_X-Y'_0 | R_{PT}(U,U')]
+\mathbb{E}[Y_X-Y'_X | R_{PT}(U,U')].
\]
The first term is the total effect in the counterfactual world of eliminating the exposure. The second term is the change induced by backtracking the mediator and outcome disturbances, so the BPTE and PIE need not coincide. Moreover, note that unlike the PIE, to which the units with $X(u)=0$ contribute nothing, the BPTE may receive contributions from these units through the second term.

Note that the BPTE considers the same exposure change as the PIE, but the change results from the alternative circumstances produced by backtracking rather than from an intervention. In the introductory example, it tells the company the change in average productivity, relative to today, to expect if the employees' circumstances change and, in response, they work in the office and adapt their collaboration with colleagues accordingly.

\section{Backtracking Circumstances Effect}

In Sections~3 and~6, the effects are those of a change in the treatment or the mediator that is induced by backtracking, and the retention sets select the alternative worlds in which that change takes place. In this section, we take the change in circumstances itself as the object of study. We ask how much of its impact on the outcome goes through each of these three channels: treatment change, mediator change, and outcome disturbance change. This is useful in our introductory example because, for a policy that changes the employees' circumstances, the three channels tell the company how much of the expected change in productivity for the employees currently working in the office comes from changing where they work, from changes in how they collaborate, and directly from the new circumstances. A policy such as a commuting subsidy may act mostly through the third channel, whereas a home-office subsidy may act mostly through the first.

Consider a factual unit $u$ with $X(u)=0$ and draw
\[
U' | U=u\sim P_{BT}(\cdot | u)
\]
without discarding any draw. The impact of the change in circumstances on the outcome of this unit is $Y(u')-Y(u)$.

\begin{definition}
Let $R_C(U,U')=\{X(U)=0\}$ denote the retained $(U,U')$ pairs, i.e., no draw of $U'$ is discarded. The backtracking circumstances effect associated with $P_{BT}$ is
\[
\text{BCE}_{P_{BT}}
=\mathbb{E}[Y(U')-Y(U) | R_C(U,U')]
=\mathbb{E}[Y(U')-Y_0 | R_C(U,U')]
\]
where the expectation is the weighted average with weights
$P_U(u)P_{BT}(u' | u)1_{R_C(u,u')}$, normalized as in Definitions~1-3.
\end{definition}

The BCE can be decomposed into three terms corresponding to the three channels of influence as follows
\[
\text{BCE}_{P_{BT}}=\text{BCE}^{\,X}+\text{BCE}^{\,M}+\text{BCE}^{\,U_Y}
\]
where
\begin{align*}
\text{BCE}^{\,X}&=\mathbb{E}[Y_{X',M'}'-Y_{X,M'}' | R_C(U,U')],\\
\text{BCE}^{\,M}&=\mathbb{E}[Y_{X,M'}'-Y_{X,M}' | R_C(U,U')],\\
\text{BCE}^{\,U_Y}&=\mathbb{E}[Y_{X,M}'-Y_{X,M} | R_C(U,U')].
\end{align*}
The first term is the effect of changing the treatment with the mediator at its alternative-world
value. The second is the effect of changing the mediator with the treatment at its factual value. The third is the effect of changing the outcome disturbance with the work arrangement and collaboration at their factual values, i.e., the pure effect of the new circumstances on the outcome that does not run through the treatment or mediator.

The BDE, BIE and BTE can be seen as special cases of the decomposition above. The BCE averages the contrast $Y(U')-Y(U)$ over all draws of $U'$. The BDE, BIE and BTE average the same contrast over subsets of these draws, namely $R_D(U,U')$, $R_I(U,U')$ and $R_T(U,U')$, respectively. Hence, the three-channel decomposition above also holds when the expectations are conditioned on any of these retention sets, and the retention set determines which channels are active:
\begin{itemize}
\item On $R_D(U,U')=\{X(U)=0,\,X(U')=1,\,M(U')=M(U)\}$, the mediator channel vanishes, since $M'=M$. The two remaining terms are the two terms of the decomposition of the BDE in Section~3.1.
\item On $R_I(U,U')=\{X(U)=0,\,X(U')=0,\,M(U')=M_1\}$, the treatment channel vanishes, since $X'=X$. The two remaining terms are the two terms of the decomposition of the BIE in Section~3.2. 
\item On $R_T(U,U')=\{X(U)=0,\,X(U')=1\}$, all three channels appear. The BTE is therefore not the sum of the BDE and BIE, but it is the sum of the three channel terms. This is an alternative decomposition to the one in Section~3.3.
\end{itemize}

Note that the BCE is conditioned on the factual work arrangement $X(U)=0$ as in Section 3, and so it concerns the employees currently working in the office. To consider the whole workforce as in Section~6, drop the conditioning, i.e., take $R_C(U,U')$ to be the trivial event. Similar to above, the BPDE, BPIE and BPTE are then special cases of the BCE, up to the change of sign in the contrast.

\section{Discussion}

Interventional effects create a counterfactual counterpart for each factual individual by overriding the mechanism, and keep it. Backtracking effects create a counterfactual counterpart for each factual individual by changing the circumstances, and keep only those that meet the retention conditions, so they describe only the retained counterfactual subpopulations. The two kinds of effects therefore complement each other, because they address different questions about different (sub)populations. Neither answers the other’s question, and which one is appropriate depends on whether the interest lies in an action applied to the whole population or in the response of a subpopulation to new circumstances. Specifically, the backtracking direct and indirect effects (BDE and BIE) draw an individual from the specified backtracking conditional distribution and retain it when the treatment and mediator constraints are satisfied. In the BDE, the treatment changes while the factual natural mediator is preserved. In the BIE, the treatment remains fixed at its reference value while the mediator takes its factual natural value under the non-reference treatment value in the alternative world. The cross-world coupling is then induced by the backtracking conditional distribution, which is therefore part of the counterfactual specification of the estimands rather than merely a computational device. Different distributions can be seen as different policies, each describing how the circumstances of the individuals would change and, consequently, who would respond and how. The BDE and BIE then help us select among such policies by showing their effects on the subpopulations of interest, e.g., the employees who switch to working from home, or those who keep working in the office but change how they collaborate.

We also presented backtracking versions of the population intervention effects (BPDE, BPIE and BPTE). Unlike the BDE, BIE and BTE, which concern only the units whose factual treatment equals the reference value, they take every unit as a factual unit, so that no part of the population is excluded in advance. However, like the BDE, BIE and BTE, they average over the retained pairs only.

Finally, we introduced the backtracking circumstances effect (BCE) and its decomposition by channel (treatment, mediator, and outcome disturbance). The BCE averages over the individuals whose factual treatment equals the reference value, and it includes the BDE, BIE and BTE as special cases. Dropping the restriction to the reference value makes the BCE average over everyone, and it includes the BPDE, BPIE and BPTE as special cases.

In this paper, we have limited our presentation to discrete mediator and outcome. An extension to continuous mediator and outcome is possible. With a continuous mediator, requiring the mediator to stay exactly the same (B(P)DE) or to reach exactly its natural value under the other treatment (B(P)IE) is too strict: such an event has probability zero, so no alternative state would ever be retained. A straightforward fix is to be less demanding and retain alternative states in which the mediator is within a tolerance of the target value. A continuous outcome, in contrast, is unproblematic, since no retention constraint involves the outcome: it only enters the contrast being averaged. The B(P)TE needs no tolerance when the treatment is binary.

\bibliographystyle{plain}
\bibliography{references}

\appendix

\section{Example}

Consider the SCM
\[
U_X, U_M, U_Y \overset{\text{i.i.d.}}{\sim} \text{Bernoulli}\big(\tfrac{1}{2}\big),
\]
with structural equations
\[
X = U_X, \qquad M = X \oplus U_M, \qquad Y = X + M + U_Y,
\]
where $\oplus$ denotes addition modulo 2. Note that $1 \oplus u = 1 - u$ for $u \in \{0,1\}$. Consequently,
\[
Y_{x,m} = x + m + U_Y, \qquad M_0 = U_M, \qquad M_1 = 1 - U_M.
\]

For computing the backtracking effects, the backtracking conditional distribution $P_{BT}$ is only needed for factual states with $X(U)=0$, i.e.\ $U_X=0$. For such states, we generate $U'_X$, $U'_M$ and $U'_Y$ independently conditional on $U$, with
\[
U'_X \sim \text{Bernoulli}\big(\tfrac{1}{2}\big), \qquad
U'_M \sim \text{Bernoulli}\big(\tfrac{9}{10}\big), \qquad
U'_Y| U_Y \sim \text{Bernoulli}\big(\tfrac{1}{10} + \tfrac{U_Y}{10}\big).
\]
The success probability of $U'_Y$ is $1/10$ when $U_Y=0$ and $2/10$ when $U_Y=1$, and thus
\[
\mathbb{E}[U'_Y] = \tfrac{1}{2}\cdot\tfrac{1}{10} + \tfrac{1}{2}\cdot\tfrac{2}{10} = \tfrac{3}{20}.
\]
Moreover, $\mathbb{E}[U'_M] = \tfrac{9}{10}$.

\subsection{NDE and BDE}

The NDE is
\[
\text{NDE} = \mathbb{E}[Y_{1,M_0} - Y_{0,M_0}] = \mathbb{E}[(1 + U_M + U_Y) - (U_M + U_Y)] = 1.
\]

For the BDE, the retained pairs satisfy $X(U)=0$, $X(U')=1$ and
$M(U')=M(U)$. Since $X=U_X$, $M(U)=U_M$ and $M(U')=1 \oplus U'_M$, this means
\[
U_X = 0, \qquad U'_X = 1, \qquad U'_M = 1 \oplus U_M.
\]
The retention event involves only $U_X$, $U'_X$, $U_M$ and $U'_M$, and so conditioning on it does not change the distribution of $(U_Y, U'_Y)$. For retained pairs, the factual and alternative outcomes are
\[
Y(U) = U_M + U_Y, \qquad Y(U') = 1 + U_M + U'_Y,
\]
because $M(U') = M(U) = U_M$. Consequently,
\[
\text{BDE}_{P_{BT}} = 1 + \mathbb{E}[U'_Y] - \mathbb{E}[U_Y] = 1 + \tfrac{3}{20} - \tfrac{1}{2} = \tfrac{13}{20} < 1 = \text{NDE}.
\]

\subsection{NIE and BIE}

The NIE is
\[
\text{NIE} = \mathbb{E}[Y_{0,M_1} - Y_{0,M_0}]
= \mathbb{E}[(1 - U_M + U_Y) - (U_M + U_Y)] = \mathbb{E}[1 - 2U_M] = 0.
\]

For the BIE, the retained pairs satisfy $X(U)=0$, $X(U')=0$ and
$M(U')=M_1(U)$. Since $X=U_X$, $M(U') = U'_M$, and
$M_1(U) = 1 - U_M$, this means
\[
U_X = 0, \qquad U'_X = 0, \qquad U'_M = 1 - U_M.
\]
As in the BDE, the retention event selects on $U_M$: conditioning on $R_I(U,U')$ gives $P(U_M=0 | R_I(U,U'))=P(U_M'=1 | R_I(U,U'))=\tfrac{9}{10}$, and hence
\[
\mathbb{E}[U_M | R_I(U,U')]=\tfrac1{10}.
\]
Unlike in the BDE, where the terms in $U_M$ cancel because the mediator is the same in both worlds, this selection now affects the result. The retention event does not involve $U_Y$ or $U'_Y$, and so
$\mathbb{E}[U_Y| R_I(U,U')] = \tfrac12$ and $\mathbb{E}[U'_Y| R_I(U,U')] = \tfrac{3}{20}$.
For retained pairs, the factual outcome is $Y(U) = U_M + U_Y$, and the
alternative outcome is
\[
Y(U') = U'_M + U'_Y = 1 - U_M + U'_Y.
\]
Consequently,
\[
\text{BIE}_{P_{BT}}
= \mathbb{E}[1 - 2U_M + U'_Y - U_Y  |  R_I(U,U')]
= 1 - \tfrac{2}{10} + \tfrac{3}{20} - \tfrac{1}{2}
= \tfrac{9}{20} > 0 = \text{NIE}.
\]

\subsection{TE and BTE}

The TE is
\[
\text{TE} = \mathbb{E}[Y_1 - Y_0]
= \mathbb{E}[(1 + (1 - U_M) + U_Y) - (U_M + U_Y)]
= \mathbb{E}[2 - 2U_M] = 1.
\]

For the BTE, the retained pairs satisfy $X(U)=0$ and $X(U')=1$, i.e., $U_X=0$ and $U'_X=1$. This event does not
involve $U_M$, $U'_M$, $U_Y$ or $U'_Y$, and so conditioning on it does not
change their distributions. For retained pairs,
\[
Y(U) = U_M + U_Y, \qquad
Y(U') = 1 + (1 \oplus U'_M) + U'_Y = 2 - U'_M + U'_Y.
\]
Consequently,
\[
\text{BTE}_{P_{BT}}
= \mathbb{E}[2 - U'_M + U'_Y - U_M - U_Y]
= 2 - \tfrac{9}{10} + \tfrac{3}{20} - \tfrac{1}{2} - \tfrac{1}{2}
= \tfrac{1}{4} < 1 = \text{TE}.
\]

\subsection{Varying $P_{BT}$}

Let us now consider the same backtracking probability distribution as above with the exception that now 
\[
U'_Y| U_Y \sim \text{Bernoulli}\big(\tfrac45 + \tfrac{U_Y}{10}\big).
\]
Then, $\mathbb{E}[U'_Y] = \tfrac{17}{20}$ and the retention events and the laws of $U_M$ and $U'_M$ are unchanged. Repeating the calculations above yields BDE=$\tfrac{27}{20}$, BIE=$\tfrac{23}{20}$ and BTE=$\tfrac{19}{20}$.

\section{Proof of Proposition 1}

\begin{proof}
By Assumptions~1 and~3, $U'_M$ is uniform, and the joint distribution of $(U_M,U'_M)$ satisfies
\begin{align*}
P(U_M=0,U'_M=0)&=P(U_M=1,U'_M=1)=\alpha,\\
P(U_M=0,U'_M=1)&=P(U_M=1,U'_M=0)=\tfrac12-\alpha
\end{align*}
for some $\alpha\in[0,\tfrac12]$, because 
\begin{align*}
P(U_M=0)&=P(U_M=0,U'_M=0)+P(U_M=0,U'_M=1)=\tfrac12,\\
P(U_M'=0)&=P(U_M=0,U'_M=0)+P(U_M=1,U'_M=0)=\tfrac12.
\end{align*}

\emph{BTE.} The retention event $R_T(U,U')=\{X(U)=0,X(U')=1\}=\{U_X=0,U'_X=1\}$ involves only the $X$-pair, and so conditioning on it also leaves the distributions of $(U_M,U'_M)$ unchanged by Assumption~2, and so $Y(U')=f_Y(1,f_M(1,U'_M),U'_Y)$ and $(U'_M,U'_Y)$ has the distribution of $(U_M,U_Y)$ by Assumption~1, and so $\mathbb{E}[Y(U') | R_T(U,U')]=\mathbb{E}[Y_1]$. Likewise, $Y(U)=Y_0$ is a function of $(U_M,U_Y)$, which is independent of $(U_X,U_X')$ by Assumption~2, and so $\mathbb{E}[Y(U) | R_T(U,U')]=\mathbb{E}[Y_0]$. Hence $\text{BTE}_{P_{BT}}=\mathbb{E}[Y_1]-\mathbb{E}[Y_0]=\text{TE}$. This step uses only Assumptions~1 and~2.

\emph{BDE.} Since $f_M(x,\cdot)$ is a bijection by Assumption~3, $M(U')=f_M(1,U'_M)$ equals $M(U)=f_M(0,U_M)$ if and only if $U'_M=d(U_M)$, where $d=f_M(1,\cdot)^{-1}\circ f_M(0,\cdot)$ is the identity or the flip. Thus $R_D(U,U')=\{U_X=0,U'_X=1,U'_M=d(U_M)\}$. The probability $P(U_M=u,U'_M=d(u))$ is $\alpha$ for both $u\in\{0,1\}$ if $d$ is the identity, and $\tfrac12-\alpha$ for both if $d$ is the flip. So, given $R_D(U,U')$, $U_M$ is still uniform, and independent of $(U_Y,U'_Y)$ by Assumption~2. On $R_D(U,U')$, $M(U')=M(U)=f_M(0,U_M)=M_0$, so that $Y(U')=f_Y(1,M_0,U'_Y)$ and $Y(U)=f_Y(0,M_0,U_Y)$. Given $R_D(U,U')$, $(U_M,U'_Y)$ and $(U_M,U_Y)$ both have the distribution of $(U_M,U_Y)$ under $P_U$ by Assumption~1, and so
\[
\mathbb{E}[Y(U') | R_D(U,U')]=\mathbb{E}[Y_{1,M_0}],\qquad
\mathbb{E}[Y(U) | R_D(U,U')]=\mathbb{E}[Y_{0,M_0}].
\]
Hence $\text{BDE}_{P_{BT}}=\mathbb{E}[Y_{1,M_0}-Y_{0,M_0}]=\text{NDE}$.

\emph{BIE.} Similarly, $M(U')=f_M(0,U'_M)$ equals $M_1=f_M(1,U_M)$ if and only if $U'_M=c(U_M)$, where $c=f_M(0,\cdot)^{-1}\circ f_M(1,\cdot)$ is the identity or the flip. Thus $R_I(U,U')=\{U_X=0,U'_X=0,U'_M=c(U_M)\}$ and, as above, given $R_I(U,U')$, $U_M$ is uniform and independent of $(U_Y,U'_Y)$. On $R_I(U,U')$, $Y(U')=f_Y(0,M_1,U'_Y)$ and $Y(U)=f_Y(0,M_0,U_Y)$, with $M_0=f_M(0,U_M)$ and $M_1=f_M(1,U_M)$. Given $R_I(U,U')$, $(U_M,U'_Y)$ and $(U_M,U_Y)$ both have the distribution of $(U_M,U_Y)$ under $P_U$, and so
\[
\mathbb{E}[Y(U') | R_I(U,U')]=\mathbb{E}[Y_{0,M_1}],\qquad
\mathbb{E}[Y(U) | R_I(U,U')]=\mathbb{E}[Y_{0,M_0}],
\]
and $\text{BIE}_{P_{BT}}=\mathbb{E}[Y_{0,M_1}-Y_{0,M_0}]=\text{NIE}$.
\end{proof}

\end{document}